\newcommand*{\LEP}[1]{#1_L}
\newcommand*{\PROBLEM}[1]{\textsc{#1}}
\newcommand*{\REP}[1]{#1_R}
\newcommand*{\SET}[1]{\{#1\}}
\newcommand*{\SETC}[2]{\{#1\mid#2\}}
\newcommand*{\VCENTER}[1]{\raisebox{-.5\height}{#1}}
\newcommand*{\CIRCLE}[3][]{
  \newcounter{length}
  \setcounter{length}{0}
  \foreach \x in {#3}{%
    \addtocounter{length}{1}%
  }
  \pgfmathsetmacro{\angle}{360/\the\value{length}};
  \foreach[count=\i] \x in {#3}{%
    \pgfmathsetmacro{\lox}{\i-1}
    \pgfmathsetmacro{\alpha}{90+(\i-1)*\angle}
    \coordinate (c\i) at (\alpha:#2);
    \node[anchor=center] at (\alpha:#2+10) {\ifx\\#1\\\x\else$#1_{\x}$\fi};
  }
  \draw (0,0) circle (#2);
}
\newenvironment{DECISIONPROBLEM}[1]{%
  \newcommand*{\INSTANCE}{\item\relax\textit{Instance:}\enskip}
  \newcommand*{\QUESTION}{\item\relax\textit{Question:}\enskip}
  \begin{trivlist}\item\relax\PROBLEM{#1}\@afterheading}{\end{trivlist}}
\title{Maximum Pagenumber-$k$ Subgraph\\is NP-Complete}
\begin{document}

\author{Peter Jonsson\\Department of Computer and Information Science\\Linköping University \and Marco Kuhlmann\\Department of Computer and Information Science\\Linköping University}

\maketitle

\begin{abstract}
  Given a graph~$G$ with a total order defined on its vertices, the
  \PROBLEM{Maximum Pagenumber-$k$ Subgraph Problem} asks for a maximum
  subgraph $G'$ of $G$ such that $G'$ can be embedded into a $k$"-book
  when the vertices are placed on the spine according to the specified
  total order.
  We show that this problem is NP-complete for $k \geq 2$.
\end{abstract}

\section{Introduction}

A \emph{k-book} is a collection of $k$ half-planes, all of which have
the same line as their boundary.
The half"-planes are called the \emph{pages} of the book and the
common line is called the \emph{spine}.
A \emph{book embedding} is an embedding of a graph into a $k$-book
such that the vertices are placed on the spine, every edge is drawn on
a single page, and no two edges cross each other.
The \emph{pagenumber} of a graph~$G$ is the smallest number of pages
for which~$G$ has a book embedding.

Computing the pagenumber of a graph is an NP-complete problem
\cite{DBLP:conf/stacs/Unger88}, and it is even NP-complete to verify
if a graph has a certain pagenumber~$k$, for fixed $k \geq 2$.
Verifying that a graph has pagenumber~$1$ however can done in
polynomial time: A graph~$G$ has pagenumber~$1$ if and only if it is
outerplanar\footnote{An undirected graph is \emph{outerplanar} if and
  only if it has a crossing"-free embedding in the plane such that all
  vertices are on the same face.} \citep{Chung:etal:sijadm87}, and
outerplanarity can be checked in linear time
\citep{DBLP:journals/ipl/Mitchell79}.

In certain applications, the order of the vertices along the spines is
not arbitrary but specified in the input.
In this case we can still check whether a graph has a $1$-book
embedding that respects~${\prec}$ in linear time.
Assume we have a graph $G = (\SET{v_1, \dots, v_m}, E)$ and spine
order $v_1 \prec \dots \prec v_m$. Extend~$E$ with the edges
$\SETC{\SET{v_i, v_{i+1}}}{1 \leq i \leq m-1} \cup \SET{\SET{v_m,
    v_1}}$
and note that $G = (V, E)$ can be embedded (in a way that respects
$\prec$) into a $1$-book if and only if the extended graph is
outerplanar.
It can also be checked in linear time whether a graph has
pagenumber~$2$ \citep{Haslinger:Stadler:bmb99}.

We are interested in the complexity of the following problem:

\begin{DECISIONPROBLEM}{(Fixed-Order) Maximum Pagenumber-$k$ Subgraph}
  \INSTANCE An undirected graph $G = (V, E)$, a total
  ordering~${\prec}$ on~$V$, and an integer $m \geq 0$.

  \QUESTION Is there a subset $E' \subseteq E$ such that $|E'| \geq m$
  and $G' = (V, E')$ can be embedded into a $k$-book such that the
  vertices in~$V$ are placed on the spine according to the total
  order~${\prec}$?
\end{DECISIONPROBLEM}

\noindent For $k = 1$, this problem can be solved in time $O(|V|^3)$
using dynamic programing \citep{DBLP:journals/corr/abs-1504-04993}.
Here we show that, for $k \geq 2$, the problem is NP"-complete, and
remains so even if we restrict solutions to acyclic subgraphs.
That is, the following problem is NP"-complete for $k \geq 2$:

\begin{DECISIONPROBLEM}{Maximum Acyclic Pagenumber-$k$ Subgraph}
  \INSTANCE A directed graph $G = (V, A)$, a total ordering~${\prec}$
  on~$V$, and an integer $m \geq 0$.

  \QUESTION Is there a subset $A' \subseteq A$ such that
  \begin{enumerate}
  \item $|A'| \geq m$,
  \item $(V, A')$ is acyclic, and
  \item $(V, A')$ can be embedded into a $k$-book such that the
    vertices in~$V$ are placed on the spine according to the total
    order~${\prec}$?
  \end{enumerate}
\end{DECISIONPROBLEM}

\section{Circle graphs}

The following is largely based on \citet{DBLP:conf/stacs/Unger92}.
A \emph{circle graph} is the intersection graph of a set of chords of
a circle.
That is, its vertices can be put in one"-to"-one correspondence with a
set of chords in such a way that two vertices are adjacent if and only
if the corresponding chords cross each other.
An example is shown in Figure~\ref{fig:CircleGraphs}.

\begin{figure}
  \centering\small
  \VCENTER{%
    \begin{tikzpicture}
      \def\unit{.5ex};
      \def\shiftunit{.75ex};
      \coordinate (c1) at (0,0);
      \filldraw (c1) circle (\unit);
      \node[anchor=north,yshift=-\shiftunit] at (c1) {$c_1$};
      \coordinate (c2) at (1,0);
      \filldraw (c2) circle (\unit);
      \node[anchor=north,yshift=-\shiftunit] at (c2) {$c_2$};
      \coordinate (c3) at (2,0);
      \filldraw (c3) circle (\unit);
      \node[anchor=north,yshift=-\shiftunit] at (c3) {$c_3$};
      \coordinate (c4) at (1,1);
      \filldraw (c4) circle (\unit);
      \node[anchor=south,yshift=\shiftunit] at (c4) {$c_4$};
      \coordinate (c5) at (2,1);
      \filldraw (c5) circle (\unit);
      \node[anchor=south,yshift=\shiftunit] at (c5) {$c_5$};
      \draw (c1) -- (c2);
      \draw (c2) -- (c3);
      \draw (c1) -- (c4);
      \draw (c2) -- (c4);
      \draw (c2) -- (c5);
      \draw (c4) -- (c5);
    \end{tikzpicture}
  }
  \quad
  \VCENTER{%
    \begin{tikzpicture}
      \CIRCLE[c]{0.8cm}{5L,1L,2L,4L,1R,5R,3L,2R,3R,4R}
      \draw (c1) -- (c6);
      \draw (c2) -- (c5);
      \draw (c3) -- (c8);
      \draw (c4) -- (c10);
      \draw (c7) -- (c9);
    \end{tikzpicture}
  }
  \quad
  \VCENTER{%
    \begin{tikzpicture}[start chain,node distance=-2mm,arc/.style={out=90,in=90}]
      \foreach \x in {5L,1L,2L,4L,1R,5R,3L,2R,3R,4R} {
        \node[on chain,name=c\x] {$c_{\x}$};
      }
      \foreach \x in {1,...,5} {
        \draw (c\x L.north) edge[arc] (c\x R.north);
      }
    \end{tikzpicture}
  }
  \caption{A circle graph (left) together with a corresponding chord
    drawing (middle) and overlap model (right). The endpoints of a
    chord $c_i$ are named $c_{iL}$ (left) and $c_{iR}$ (right) based
    on a counter"-clockwise linear ordering of the vertices around the
    circle (chord drawing) or from left to right (overlap model).}
  \label{fig:CircleGraphs}
\end{figure}
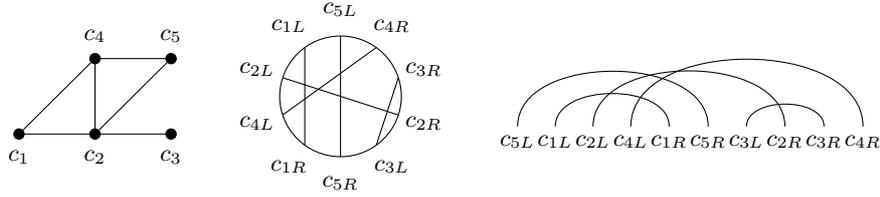

Circle graphs are often represented in a different way.
We assume henceforth, without loss of generality, that no two chords
have a common endpoint on the circle.
To obtain a so"-called \emph{overlap model} for a circle graph, we
start with a chord drawing and do the following:
\begin{enumerate}
\item we break up the circle and straighten it out into a line and
\item we turn the chords into arcs above the line that represents the
  circle.
\end{enumerate}

It is easy to see that the standard representation and the overlap
representation are equivalent.
For simplicity, we will still call the arcs in the overlap
representation chords.
Given a chord~$x$ in an overlap representation, we will denote its
left endpoint with $\LEP{x}$ and its right endpoint with $\REP{x}$. We
assume without loss of generality that the endpoints are represented
by positive integers.
Formally, we can now define circle graphs in the overlap
representation as follows.
An undirected graph $G = (V, E)$ with $V = \SET{v_1, \dots, v_n}$ is a
circle graph if and only if there exists a set of chords
\begin{displaymath}
  C = \SETC{(\LEP{c}, \REP{c})}{\text{$1 \leq i \leq n$ and $\LEP{c} < \REP{c}$}}
\end{displaymath}
such that
\begin{displaymath}
  \SET{v_i, v_j} \in E
  \quad\text{if and only if}\quad
  c_i \otimes c_j
\end{displaymath}
where the Boolean predicate ${\otimes}$ denotes the intersection of
chords, i.e.\ $c \otimes d$ if $\LEP{c} < \LEP{d} < \REP{c} < \REP{d}$
or $\LEP{d} < \LEP{c} < \REP{d} < \REP{c}$.
Given a circle graph $G = (V, E)$ with an overlap representation~$C$,
we let $C(v)$ (where $v \in V$) denote the chord corresponding to~$v$.

\section{Proof}

Given a graph $G = (V, E)$ and a subset $V' \subseteq V$, we let
$G | V'$ denote the subgraph of~$G$ induced by~$V'$, i.e.\ $G | V$ has
vertex set~$V'$ and the edge set contains exactly those edges in~$E'$
that have both their endpoints in~$V'$.

Our starting point is the following problem.

\begin{DECISIONPROBLEM}{$k$-Colourable Induced Subgraph Problem for
    Circle Graphs ($k$-CIG)}
  \INSTANCE A circle graph $G = (V, E)$ and an integer $m \geq 0$.

  \QUESTION Is there a subset $V' \subseteq V$ such that $|V'| \geq m$
  and the graph $G|V'$ is $k$-colourable?
\end{DECISIONPROBLEM}

\citet{DBLP:journals/tcad/CongL91} have shown that $k$-CIG is
NP-complete when $k \geq 2$.
In the cases when $k = 2$ and $k \geq 4$, this result is based on
earlier results by \citet{DBLP:journals/tcad/SarrafzadehL89} and
\citet{DBLP:conf/stacs/Unger88}, respectively.

We now show that \PROBLEM{Maximum Pagenumber-$k$ Subgraph} is
NP-complete by a reduction from $k$-CIG.

\begin{proposition}
  \PROBLEM{Maximum Pagenumber-$k$ Subgraph} is NP-complete.
\end{proposition}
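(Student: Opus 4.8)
The plan is to give a polynomial-time many-one reduction from $k$-CIG, which (as recalled above) is NP-complete for $k \geq 2$. An instance of $k$-CIG is a circle graph $G = (V, E)$ with $|V| = n$ together with a bound $m$; we may assume an overlap representation $C$ of $G$ is given, since one can be computed in polynomial time if it is not, and by the standing assumption no two chords of $C$ share an endpoint, so the $2n$ chord endpoints are pairwise distinct integers. From this we construct an instance $(H, \prec, m)$ of \PROBLEM{Maximum Pagenumber-$k$ Subgraph} as follows: $H = (W, F)$, where $W$ is the set of all $2n$ chord endpoints, ${\prec}$ is the order they inherit from the integers, and $F = \SETC{\SET{\LEP c, \REP c}}{c \in C}$ contains, for every chord $c$, one edge joining its two endpoints. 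Thus $H$ is a simple graph with exactly $n$ edges, and $c \mapsto \SET{\LEP c, \REP c}$ is a bijection between $C$ (equivalently $V$, via $v \mapsto C(v)$) and $F$; the bound $m$ is carried over unchanged.

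Correctness rests on the observation that, in a book embedding respecting the spine order ${\prec}$, two edges $\SET{a, b}$ and $\SET{a', b'}$ of $H$ with $a < b$ and $a' < b'$ can be drawn on the same page without crossing if and only if they do not interleave, i.e.\ neither $a < a' < b < b'$ nor $a' < a < b' < b$ holds; by the definition of the predicate ${\otimes}$ this is exactly the condition that the corresponding chords do not cross, that is, that the corresponding vertices of $G$ are non-adjacent. Consequently, for any $F' \subseteq F$ with corresponding vertex set $V' \subseteq V$, an assignment of the edges of $F'$ to $k$ pages with no crossings inside any page is literally the same object as a partition of $V'$ into $k$ independent sets of $G | V'$, i.e.\ a proper $k$-colouring of $G | V'$. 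Hence $(W, F')$ embeds into a $k$-book respecting ${\prec}$ if and only if $G | V'$ is $k$-colourable, and since $|F'| = |V'|$ the constructed instance is a yes-instance if and only if $G$ is. Membership in NP is routine: a certificate consists of a set $F'$ together with a $k$-page assignment of its edges, and one checks in polynomial time that $|F'| \geq m$ and that no two edges sharing a page interleave.

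I do not expect a serious obstacle: the reduction is short, and the map between instances is essentially the identity on the chord model. The one point that warrants care is the chain of equivalences ``same page $\Leftrightarrow$ non-interleaving $\Leftrightarrow$ non-crossing chords $\Leftrightarrow$ non-adjacent in $G$''; establishing it cleanly uses the no-common-endpoint normalisation (so that $F$ is genuinely a simple graph in bijection with the chords and the inequalities above are all strict), and it uses the fact that selecting a subset of edges of $H$ automatically selects the induced subgraph $G | V'$, so that no gadget is needed to enforce ``induced''. Finally, the same construction settles \PROBLEM{Maximum Acyclic Pagenumber-$k$ Subgraph}: orienting each edge $\SET{\LEP c, \REP c}$ from $\LEP c$ to $\REP c$ makes every arc point forward along ${\prec}$, so every subset of arcs is acyclic and the acyclicity requirement is vacuous, leaving the above equivalence intact.
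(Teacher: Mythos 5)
Your reduction is exactly the one in the paper: build $H=(W,F)$ from the overlap model by taking the chord endpoints as vertices in their integer order and one edge per chord, keep $m$, and observe that a $k$-page assignment of a subset of $F$ is the same thing as a proper $k$-colouring of the corresponding induced subgraph of $G$, since two edges can share a page precisely when the chords do not cross. The argument (including the NP-membership check via per-page non-interleaving) is correct and matches the paper's proof in all essentials.
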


\begin{proof}
  We first show that \PROBLEM{Maximum Pagenumber-$k$ Subgraph} is in
  NP.
  Given an instance $((V, E), {\prec},m)$ of this problem,
  non"-deterministically guess a subset $E' \subseteq E$ and a
  partitioning $E'_1, \dots, E'_m$ of~$E'$.
  The instance has a solution if and only if
  $(V, E'_1), \dots, (V, E'_k)$ have pagenumber~$1$ under the spine
  order~${\prec}$.
  This property can be checked in polynomial time as was pointed out
  earlier.

  We next prove NP-hardness via a polynomial"-time reduction from
  $k$-CIG. 
  Arbitrarily choose a circle graph $G = (V, E)$ and an integer
  $m \geq 0$.
  Construct (in polynomial time) an overlap model~$C$ of $G$ using the
  algorithm of \citet{DBLP:journals/jal/Spinrad94}.
  This overlap model defines a graph $H = (W, F)$ as follows:
  \begin{align*}
    W &= \SETC{\LEP{v}, \REP{v}}{\text{$v \in V$ and $C(v) = (\LEP{v}, \REP{v})$}}\\
    F &= \SETC{\SET{\LEP{v}, \REP{v}}}{\text{$v \in V$ and $C(v) = (\LEP{v}, \REP{v})$}}
  \end{align*}
  Finally, let~${\prec}$ be the natural linear ordering on~$W$.
  We claim that $(H, \prec, m)$ has a solution if and only if $(G, m)$
  has a solution.

  Assume that $(H, {\prec}, m)$ has a solution set of edges~$X$.
  Assume that $X = X_1 \cup \dots \cup X_k$ where the edges in~$X_1$
  are assigned to page~1, the edges in~$X_2$ to page~2, and so on.
  Construct a solution set of vertices $Y$ for $(G, m)$ as follows:
  $Y = \SETC{v \in V}{\SET{\LEP{v},\REP{v}} \in X}$.
  Obviously, $|Y| \geq m$.
  We show that $G|Y$ is $k$-colourable.
  Colour the edges in~$X_i$, $1 \leq i \leq k$ with colour $i$.
  Each edge corresponds to a vertex in~$Y$; let this vertex inherit
  its colour.
  Consider an arbitrary edge $(v, w)$ appearing in $G | Y$.
  Since $(v, w) \in E$, the chords $C(v) = (\LEP{v}, \REP{v})$ and
  $C(w) = (\LEP{w}, \REP{w})$ intersect.
  Thus, the edges $\SET{\LEP{v}, \REP{v}}$ and
  $\SET{\LEP{w}, \REP{w}}$ in~$F$ cannot be placed on the same book
  page which implies that~$v$ and~$w$ are assigned different colours.

  Assume that $(G, m)$ has a solution set of vertices~$V'$.
  Let $f\mathpunct{:}\ V' \to \SET{1,\dots,k}$ be a $k$-colouring of
  $G|V'$.
  Construct a solution set of edges $T$ for $(H, {\prec}, m)$ as
  follows:
  \begin{displaymath}
    T = \SETC{\SET{\LEP{v}, \REP{v}}}{\text{$v \in V'$ and
        $(\LEP{v}, \REP{v}) = C(v)$}}
  \end{displaymath}
  Obviously, $|T| \geq m$.
  We show that $(W, T)$ can be embedded into a $k$-book with spine
  order~${\prec}$.
  Pick an arbitrary edge $e = \SET{\LEP{v}, \REP{v}} \in T$ and put
  $e$ on page $f(v)$.
  Assume now that edges $e = \SET{\LEP{v}, \REP{v}}$ and
  $e' = \SET{\LEP{v'}, \REP{v'}}$ in~$T$ cross each other, i.e.\ that
  $(\LEP{v}, \REP{v}) \otimes (\LEP{v'}, \REP{v'})$ and $e, e'$ appear
  on the same page.
  This implies that $f(v) = f(v')$ and that there is an edge
  between~$v$ and~$v'$ in~$G$. Since $v, v' \in V'$, we see that this
  edge appears in $G|V'$. This contradicts the fact that~$f$ is a
  $k$-colouring of $G|V'$.
\end{proof}

\begin{corollary}
  \PROBLEM{Maximum Acyclic Pagenumber-$k$ Subgraph} is NP-hard when
  $k \geq 2$.
\end{corollary}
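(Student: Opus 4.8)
The plan is to observe that the graph produced by the reduction in the proof of the Proposition is already a matching, hence contains no cycle whatsoever, so that orienting its edges makes the acyclicity requirement trivial; the rest of the argument then transfers verbatim.

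First I would revisit the graph $H = (W, F)$ built by that reduction. Because we assume that no two chords of the overlap model $C$ share an endpoint, the $2|V|$ endpoints $\LEP{v}$ and $\REP{v}$ (for $v \in V$) are pairwise distinct, so every vertex of $W$ is incident to exactly one edge of $F$. Thus $H$ is a perfect matching on $2|V|$ vertices, and in particular it has no cycle at all.

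Next I would define a directed instance $H' = (W, A)$ by orienting each edge $\SET{\LEP{v}, \REP{v}} \in F$ from its left to its right endpoint, i.e.\ $A = \SETC{(\LEP{v}, \REP{v})}{v \in V \text{ and } C(v) = (\LEP{v}, \REP{v})}$, and I would keep both the natural spine order $\prec$ on $W$ and the same bound $m$. Since the underlying undirected graph $H$ is acyclic, $(W, A')$ is acyclic for every $A' \subseteq A$, so the acyclicity condition of \PROBLEM{Maximum Acyclic Pagenumber-$k$ Subgraph} holds automatically for this instance; and since a book embedding never refers to edge orientations, the size and embeddability conditions coincide word for word with the two conditions treated in the proof of the Proposition. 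Hence the two directions of the equivalence proved there go through unchanged: a solution $X \subseteq A$ yields a vertex set $Y = \SETC{v \in V}{(\LEP{v}, \REP{v}) \in X}$ with $|Y| \geq m$ for which $G|Y$ is $k$-colourable, and conversely, from a vertex solution $V'$ of $(G, m)$ together with a $k$-colouring $f$ of $G|V'$ one obtains an arc set $T = \SETC{(\LEP{v}, \REP{v})}{v \in V' \text{ and } (\LEP{v}, \REP{v}) = C(v)}$ that is acyclic because $H$ is, has $|T| \geq m$, and embeds into a $k$-book by placing each arc on page $f(v)$.

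Since $k$-CIG is NP-hard for $k \geq 2$ and the construction above is plainly polynomial, NP-hardness of \PROBLEM{Maximum Acyclic Pagenumber-$k$ Subgraph} follows. I do not expect any real obstacle here: the only point that has to be made explicitly is that $H$ is a matching and therefore acyclic, which renders the extra constraint free, and everything else is a transcription of the preceding proof.
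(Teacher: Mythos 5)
Your proof is correct, but it takes a somewhat different route from the paper's. The paper reduces from \PROBLEM{Maximum Pagenumber-$k$ Subgraph} itself as a black box: it identifies $V$ with $\SET{1,\dots,m}$ and orients every edge $\SET{i,j}$ from the smaller to the larger endpoint, so that the resulting digraph (and hence every subgraph) is acyclic because the orientation is consistent with a total order; the acyclicity constraint is thereby made vacuous for \emph{arbitrary} instances of the undirected problem, and the equivalence of the two instances is immediate. You instead reduce directly from $k$-CIG, re-using the graph $H=(W,F)$ from the Proposition and observing that it is a perfect matching (since no two chords share an endpoint), so that any orientation of its edges is trivially acyclic. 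Both arguments rest on the same idea "-- make acyclicity hold by construction so the extra condition costs nothing "-- but the paper's version is more modular and slightly more general (it yields NP-hardness from the undirected problem without reopening the previous reduction), whereas yours exploits a stronger structural property of the specific instances produced there (being a matching, $H$ is acyclic under \emph{every} orientation, not just the left-to-right one). Either is a complete and polynomial-time reduction, and your write-up correctly notes the one point that needs to be made explicit, namely that $H$ has maximum degree one.
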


\begin{proof}
  Polynomial-time reduction from \PROBLEM{Maximum Pagenumber-$k$
    Subgraph}.
  Let $((V, E), {\prec}, m)$ be an arbitrary instance of
  \PROBLEM{Maximum Pagenumber-$k$ Subgraph}.
  We first show that \PROBLEM{Maximum Acyclic Pagenumber-$k$ Subgraph}
  is in~NP.
  Non-deterministically guess a subset $A' \subseteq A$ and a
  partitioning $A'_1, \dots, A'_m$ of~$A'$.
  Let $E'_i$ denote the corresponding set of undirected edges, i.e.\
  $E'_i = \SETC{\SET{v,w}}{(v,w) \in A'_i}$, and note that the
  instance has a solution if and only if $(V, E'_1), \dots, (V, E'_k)$
  have pagenumber~1 under the spine order~${\prec}$.
  
  We continue by proving NP-hardness.
  Assume without loss of generality that $V = \SET{1, \dots, m}$.
  Construct a directed graph $(V, A)$ as follows: the arc $(i, j)$ is
  in~$A$ if and only if $i < j$ and the edge $\SET{i, j}$ is in~$E$.
  Note that $(V, A)$ (and consequently every subgraph) is acyclic.
  It is now easy to verify that $((V, A), {\prec}, m)$ has a solution
  if and only if $((V, E), {\prec}, m)$ has a solution.
\end{proof}

\section*{Acknowledgments}

We thank the participants of the Dagstuhl Seminar 15122 `Formal Models
of Graph Transformation for Natural Language Processing' for
interesting discussions on the problem studied in this draft.

\bibliographystyle{plainnat}
\bibliography{peter}

\end{document}